\documentclass{article}
\usepackage{amsmath,amsfonts,amsthm,color}
\usepackage{color}
\usepackage{graphicx}
\usepackage{hyperref}
\newtheorem{assumption}{Assumption}[section]
\newtheorem{corollary}{Corollary}[section]
\newtheorem{proposition}{Proposition}[section]

\newtheorem{theorem}{Theorem}[section]
\theoremstyle{remark}
\newtheorem{remark}{Remark}[section]
\usepackage[normalem]{ulem}

\usepackage{geometry}
\usepackage{enumerate}

\usepackage{float}

\newcommand{\ind}{1\hspace{-2.1mm}{1}} 

\DeclareMathOperator*{\argmin}{\arg\!\min}

\author{Aitor Muguruza\thanks{The author is grateful to Luc Mathieu and Sylvain Pouteaux for fruitful discussions} \\
 Department of Mathematics, Imperial College London and NATIXIS
\\
aitor.mgz@gmail.com
}
\date{\today}
\title{Not so Particular about Calibration:\\ Smile Problem Resolved}

\begin{document}
\maketitle
\begin{abstract}
We present a novel Monte Carlo based LSV calibration algorithm that applies to all stochastic volatility models, including the non-Markovian rough volatility family. Our framework overcomes the limitations of the particle method proposed by Guyon and Henry-Labord\`ere (2012) and theoretically guarantees a variance reduction without additional computational complexity. Specifically, we obtain a closed-form and exact calibration method that allows us to remove the dependency on both the kernel function and bandwidth parameter. This makes the algorithm more robust and less prone to errors or instabilities in a production environment. We test the efficiency of our algorithm on various hybrid (rough) local stochastic volatility models.
\end{abstract}
\section{Introduction}
The calibration of local-stochastic volatility (LSV) models is a fundamental problem in financial modelling.  In the case of low dimensional Markovian models, efficient PDE methods are already available to compute the leverage function (see Lipton \cite{Lipton}). For more complex Markovian models, (e.g. multi factor or hybrid models)  PDE methods fall into the so-called curse of dimensionality, making their use more complex and slow. The most transparent and effective solution to this LSV calibration problem in high dimensional models has been the particle method developed by Guyon and Henry-Labord\`ere \cite{Guyon-Labordere}. It is worth noting, however, that this method requires the prespecification of a kernel function and bandwidth parameter, which can be unnnerving in an automatised production environment.\\\\
In this paper we present an exact algorithm that follows the principle of the particle method, without relying on non-parametric methods to obtain the leverage function. Perhaps surprisingly, we are able to obtain an exact and closed-form algorithm that calibrates the leverage function, and applies to all Stochastic Volatility (SV) models.\\\\
The article is organised as follows. We first introduce our LSV framework and present the principle of LSV calibration algorithms. Then, we present our theoretical results which yield the new exact algorithm. Finally, we illustrate the efficiency of our algorithm on various models; notably, we achieve the calibration of a first of a kind hybrid rough LSV model. 
\section{LSV framework and mathematical setting}
A LSV model under the risk-neutral measure  $\mathbb{Q}$ is given by
\begin{equation}\frac{\mathrm{d}S_t}{S_t}=r_t \mathrm{d}t+\lambda(S_t,t) \sqrt{V_t} \mathrm{d}W_t\label{eq:LSV}
\end{equation}
where $\lambda:\mathbb{R}\times \mathbb{R}_+\to \mathbb{R}_+$ is the so-called leverage function. $W$ here, is a unidimensional Brownian motion and without loss of generality, the processes $V$ and $r$ are assumed to be driven by an n-dimensional Brownian motion $Z$, where $n\in\mathbb{N}$. In addition, we
assume the tuple $(W_t,Z_t)$  to have the following correlation structure: 

$$\Sigma=\begin{pmatrix} 1 & \rho^T_{WZ}\\ \rho_{WZ} & \Sigma_Z\end{pmatrix}\in\mathbb{R}^{(n+1)\times(n+1)},\quad \Sigma_Z\in\mathbb{R}^{n\times n}, \quad  \rho_{WZ}\in\mathbb{R}^{n\times 1}.$$
The processes are defined on a given probability space $\left (\Omega,(\mathcal{G}_t)_{t\geq 0},\mathbb{Q}\right)$ with $\mathcal{G}_t$ being the natural filtration
generated by the aforementioned Brownians. Let us further denote by $\mathcal{F}^W_t$ and  $\mathcal{F}^Z_t$ the filtrations generated by $W$ and $Z$ respectively, such that $\mathcal{G}_t=\mathcal{F}^W_t\cup\mathcal{F}^Z_t$ holds. The existence of solutions of this McKean SDE is a very intricate mathematical problem (see \cite{Jourdain}) and falls
outside the scope of this paper.
\begin{remark}
The precise definition of filtrations and correlations goes beyond sheer mathematical rigour, as these are the essential tools to develop the exact formula of our algorithm later on.
\end{remark}
\begin{remark}
Dividends are neglected for sake of simplicity, but it is straightforward to consider them as in Guyon and Henry-Labord\`ere \cite{Guyon-Labordere}.
\end{remark}
\section{The calibration of the leverage function in a nutshell}
The probabilistic condition in \eqref{eq:LSV} for the leverage function to be calibrated to market smiles is given
by (see Balland \cite{Balland})
\begin{equation}\lambda^2(K,t)\frac{\mathbb{E}^{\mathbb{Q}}[D(t)V_t|S_t=K]}{\mathbb{E}^{\mathbb{Q}}[V_t|S_t=K]}=\sigma_{Dup}(K,t)^2-\frac{\mathbb{E}^{\mathbb{Q}}\left[D(t)\left(r_t-\bar{r}_t\right)\ind_{\{S_t>K\}}\right]}{\frac{1}{2}K\frac{\partial^2 C}{d K^2}},\label{eq:LeverageFunctionCondition}\end{equation}
where $D(t)=e^{-\int_0^t r_s ds}$, $ZCB(t)=\mathbb{E}^{\mathbb{Q}}[D(t)]$, $\bar{r}_t=\frac{\partial \log(ZCB(t))}{dt}$. Note that $\mathbb{E}^{\mathbb{Q}}\left[D(t) r_t \right]=\bar{r}_t$. We define $\sigma_{Dup}(K,t)$ to be the local volatility (Dupire \cite{Dupire})
associated with the observed market i.e.

$$\sigma_{Dup}(K,T)^2=\frac{\frac{\partial C}{\partial T}+K\frac{\partial C}{dK}\bar{r}_0}{\frac{1}{2}K^2 \frac{\partial^2 C}{\partial K^2}}.$$
All numerical algorithms known to solve the calibration task  rely on the following standard assumption:
\begin{assumption}\label{ass:Lambda}
The leverage function has the following structure:

$$\lambda(x,t)=\sum_{i=1}^{\tau(t)} f_i(x) \ind_{\{t\in[t_{i-1},t_i)\}},\quad \tau(t):=\argmin_{j\in\mathbb{N}}\{t<t_j\}$$  for  functions $f_i:\mathbb{R}_+\mapsto \mathbb{R}_+.$
\end{assumption}
Under Assumption \ref{ass:Lambda}, the time domain is discretised and the following algorithm is constructed:
\begin{figure}
\centering
\includegraphics[scale=0.6]{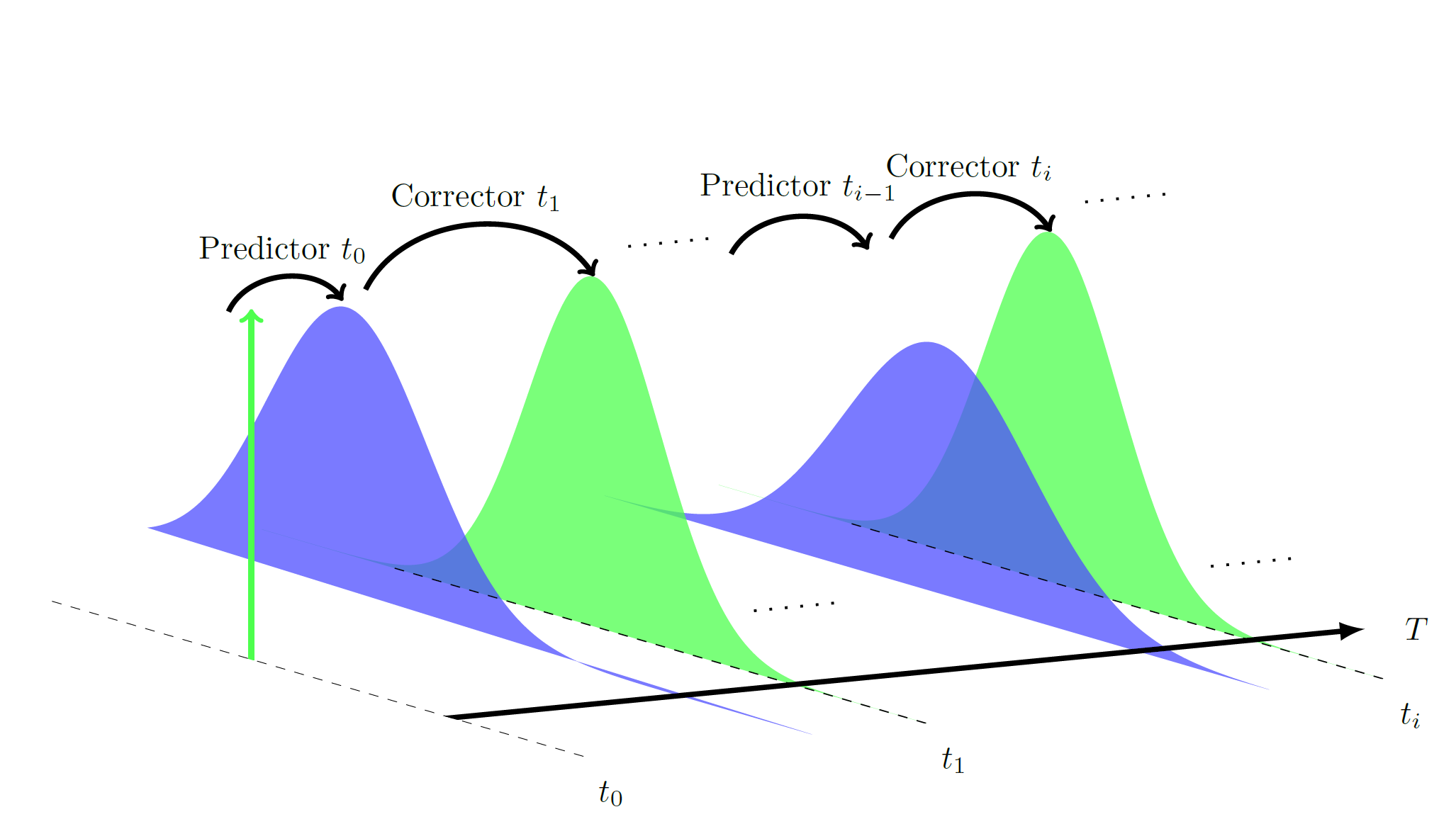}
\caption{Graphical description of the forward in time LSV calibration algorithm. The $t_0$ distribution of the Spot is by convention a Dirac function, hence the arrow representation. The algorithm predicts the distribution implied by the model at time $t_{i-1}$ for time $t_i$ and corrects accordingly to match the market implied distribution.}
\label{fig:PredictorCorrector}
\end{figure}\\\\
\noindent\textbf{LSV Calibration Algorithm}
\begin{enumerate} 
\item Fix a time partition $\{t_0=0,...,t_n=T\}$ and space partition $\{K_0=K_{min},...,K_m=K_{max}\}$.
\item Initialize $\lambda(K_j,t_0)=\displaystyle\frac{\sigma_{Dup}(K_j,t_0)}{V_0},\quad j=0,...,m$ .
\item For $i=1...n:$  \\
\hspace*{0.25cm} For $j=0,...,m$
\begin{quote}
\textbf{Step 1. Predictor:}
\begin{quote}
$\bullet$ Calculate $\mathbb{E}^{\mathbb{Q}}\left[ D(t_i)\left(r_{t_i}-\bar{r}_{t_i}\right)\ind_{\{S_{t_i}>K_j\}}\right]$\\
$\bullet$ Obtain $\displaystyle \frac{\mathbb{E}^{\mathbb{Q}}[D(t_i)V_{t_i}|S_{t_i}=K_j]}{\mathbb{E}^{\mathbb{Q}}[D(t_i)|S_{t_i}=K_j]}$
\end{quote}
\textbf{Step 2. Corrector:}
\begin{quote}
$\bullet$$\lambda^2(K_j,t_i)=\displaystyle\left(\sigma_{Dup}(K_j,t_i)^2-\frac{\mathbb{E}^{\mathbb{Q}}\left[ D(t_i) \left(r_{t_i}-\bar{r}_{t_i}\right)\displaystyle \ind_{\{S_{t_i}>K_j\}}\right]}{\frac{1}{2}K\frac{\partial^2 C}{d K^2}}\right)\frac{\mathbb{E}^{\mathbb{Q}}[ D(t_i)|S_{t_i}=K_j]}{\mathbb{E}^{\mathbb{Q}}[ D(t_i)V_{t_i}|S_{t_i}=K_j]}.$
\end{quote}
\end{quote}
\end{enumerate}

As noted above, the procedure updates the leverage function $\lambda$, the critical step being the correct computation of $ \frac{\mathbb{E}^{\mathbb{Q}}[D(t_i)V_{t_i}|S_{t_i}=K_j]}{\mathbb{E}^{\mathbb{Q}}[D(t_i)|S_{t_i}=K_j]}$ at time $t_{i-1}$. Figure \ref{fig:PredictorCorrector} is a graphical description of the iterative procedure (forward in time) of the calibration algorithm.

\subsection{The original particle method} 
In order to compute  conditional expectations, Guyon and Henry-Labord\`ere \cite{Guyon-Labordere} propose a Monte Carlo based estimation via non-parametric Nadaraya-Watson kernel regression, more precisely if we denote by $S^u_{t_i}$ and $v^u_{t_i}$ the u-th particle obtained in a Monte Carlo simulation with $M$ particles, then the estimator is given by:
\begin{equation}\label{eq:ParticleMethod}\frac{\mathbb{E}^{\mathbb{Q}}[ D(t_i)|S_{t_i}=K_j]}{\mathbb{E}^{\mathbb{Q}}[ D(t_i)V_{t_i}|S_{t_i}=K_j]}\approx \frac{\displaystyle\sum_{u=1}^{M} D(t_i)^u V^u_{t_i}K_h(S^u_{t_i}-K_j)}{\displaystyle\sum_{u=1}^{M}D(t_i)^u K_h(S^u_{t_i}-K_j)},\end{equation}
where $K_h(\cdot)$ is a suitable kernel function with bandwith parameter $h>0$. In the original article, a rule of thumb is given for the choice of $h$ and the choice of a quartic kernel is recommended. In our numerical tests in Section \ref{sec:numerics}, we follow these guidelines for implementing the particle method as benchmark.
\subsection{Limitations of the particle method: Bias, Variance and Convergence}
It is well known, that non-parametric kernel regression methods suffer from a bias of order $\mathcal{O}(h^2)$ (see H\"ardle and Bowman \cite{HardleBowman}). Most importantly, the variance of the estimator is of order $\mathcal{O}(n^{-1/2+p(h)})$ where $p(h) < 0$ heavily depends on the correct choice of the bandwidth $h$. The need for a prespecified parameter $h$ of such critical importance for the correct performance of the method, poses an inherent risk and makes it inclined to potential instabilities.

\section{ The exact formula}

In order to obtain the closed-form formula for $\frac{\mathbb{E}^{\mathbb{Q}}[ D(t_i)|S_{t_i}=K_j]}{\mathbb{E}^{\mathbb{Q}}[ D(t)V_{t}|S_{t}=K]}$ , we first obtain the following result.

\begin{theorem}\label{thm:condDistr}
Given model \eqref{eq:LSV} and Assumption \ref{ass:Lambda} we have
$$\log S_{t_i}|\mathcal{F}^{W}_{t_{i-1}}\cup \mathcal{F}^{Z}_{t_i}\sim \mathcal{N}\left(\mu_i, (1-\hat{\rho}^2)\sigma^2_i \right),$$
where \begin{align*}\mu_i&:=\log S_{t_{i-1}}+\int_{t_{i-1}}^{t_i} r_u  \mathrm{d}u
+\int_{t_{i-1}}^{t_i} \sqrt{V_u}\lambda(S_{t_{i-1}},u)   \mathrm{d}W_u^{||} -\frac{1}{2} \int_{t_{i-1}}^{t_i}V_u \lambda(S_{t_{i-1}},u)^2 \mathrm{d}u\\
\sigma^2_i&:= \int_{t_{i-1}}^{t_i}V_u \lambda(S_{t_{i-1}},u)   \mathrm{d}u,\quad 
\hat{\rho}^2:=\rho_{WZ}^T \Sigma_Z^{-1} \rho_{WZ},\quad 
W_t^{||}:=\rho_{WZ}^T \Sigma_Z^{-1}Z_t 
\end{align*}
\end{theorem}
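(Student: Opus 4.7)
The plan is to exploit the Gaussian structure of the Brownian drivers via an orthogonal decomposition of $W$ with respect to $Z$. Under Assumption~\ref{ass:Lambda}, and reading $\lambda(S_{t_{i-1}},u)$ as the leverage frozen at the left endpoint of the current interval (consistent with the one-step Euler update of the calibration algorithm), the volatility coefficient $\sqrt{V_u}\lambda(S_{t_{i-1}},u)$ is jointly $\mathcal{F}_{t_{i-1}}\vee\mathcal{F}^Z_{t_i}$-measurable. It\^o's lemma applied to $\log S$ on $[t_{i-1},t_i]$ then yields
$$\log S_{t_i}=\log S_{t_{i-1}}+\int_{t_{i-1}}^{t_i}r_u\mathrm{d}u-\frac{1}{2}\int_{t_{i-1}}^{t_i}V_u\lambda(S_{t_{i-1}},u)^2\mathrm{d}u+\int_{t_{i-1}}^{t_i}\sqrt{V_u}\lambda(S_{t_{i-1}},u)\mathrm{d}W_u,$$
so identifying the conditional law of $\log S_{t_i}$ reduces to identifying that of the terminal stochastic integral.

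Next I would set $W_t^{||}:=\rho_{WZ}^T\Sigma_Z^{-1}Z_t$ and $W_t^{\perp}:=W_t-W_t^{||}$, and establish two facts. A direct quadratic-covariation computation gives $\langle W^{||},Z_j\rangle_t=\rho_{WZ,j}\,t=\langle W,Z_j\rangle_t$ for every component $j$, so $\langle W^{\perp},Z\rangle\equiv 0$; since $(W^{\perp},Z)$ is jointly Gaussian, this lifts to genuine probabilistic independence of $W^{\perp}$ and the filtration $\mathcal{F}^Z$. Second, $\langle W^{||}\rangle_t=\hat{\rho}^2 t$ forces $\langle W^{\perp}\rangle_t=(1-\hat{\rho}^2)t$, so that $W^{\perp}/\sqrt{1-\hat{\rho}^2}$ is a standard Brownian motion independent of $Z$.

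Splitting the $\mathrm{d}W$-integral along this decomposition, every ingredient of the announced $\mu_i$ is $\mathcal{F}^W_{t_{i-1}}\vee\mathcal{F}^Z_{t_i}$-measurable: $S_{t_{i-1}}\in\mathcal{F}_{t_{i-1}}$, $r$ and $V$ are $\mathcal{F}^Z$-adapted, and the parallel integral $\int_{t_{i-1}}^{t_i}\sqrt{V_u}\lambda(S_{t_{i-1}},u)\mathrm{d}W_u^{||}$ is a pathwise functional of $Z|_{[0,t_i]}$. The only truly random contribution is $\int_{t_{i-1}}^{t_i}\sqrt{V_u}\lambda(S_{t_{i-1}},u)\mathrm{d}W_u^{\perp}$, whose integrand is measurable with respect to the conditioning while the $W^{\perp}$-increments on $[t_{i-1},t_i]$ are independent both of $\mathcal{F}^Z_{t_i}$ (by the orthogonal construction above) and of $\mathcal{F}^W_{t_{i-1}}$ (Brownian independent-increments, after noting $\mathcal{F}^W_{t_{i-1}}\subseteq\sigma(\mathcal{F}^{W^{||}}_{t_{i-1}},\mathcal{F}^{W^{\perp}}_{t_{i-1}})$). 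Conditionally on the $\sigma$-algebra this therefore becomes a Wiener integral with deterministic integrand driven by a Brownian motion of variance rate $1-\hat{\rho}^2$, hence centred Gaussian with variance $(1-\hat{\rho}^2)\int_{t_{i-1}}^{t_i}V_u\lambda(S_{t_{i-1}},u)^2\mathrm{d}u$.

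The delicate step is this last independence argument: one must upgrade the pathwise orthogonality $\langle W^{\perp},Z\rangle\equiv 0$ to genuine $\sigma$-algebra independence via joint Gaussianity, and then combine it with the Brownian independent-increments property so that the relevant $W^{\perp}$-increments are simultaneously independent of both halves of the conditioning. Once this is granted, the decomposition $\log S_{t_i}=\mu_i+(\text{conditional centred Gaussian})$ directly identifies the conditional law as $\mathcal{N}(\mu_i,(1-\hat{\rho}^2)\sigma_i^2)$, with the conditional variance matching the announced expression (reading $\sigma_i^2$ as $\int_{t_{i-1}}^{t_i}V_u\lambda(S_{t_{i-1}},u)^2\mathrm{d}u$, the square on $\lambda$ arising naturally from the It\^o isometry).
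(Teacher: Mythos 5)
Your proof is correct and follows essentially the same route as the paper's: an orthogonal decomposition $W=W^{||}+W^{\perp}$ with $W^{\perp}$ independent of $\mathcal{F}^Z$, after which the conditional law is read off from a Wiener integral whose integrand is measurable with respect to the conditioning $\sigma$-algebra. You merely supply the independence and It\^o-isometry details that the paper's one-line proof leaves implicit, and you are right that the conditional variance requires $\lambda^2$ in the integrand, i.e.\ $\sigma_i^2=\int_{t_{i-1}}^{t_i}V_u\,\lambda(S_{t_{i-1}},u)^2\,\mathrm{d}u$; the missing square in the theorem's displayed definition of $\sigma_i^2$ is a typo.
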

\begin{proof}
First we see that,
$$\log S_{t_i}|\mathcal{F}^{W}_{t_{i-1}}\cup \mathcal{F}^{Z}_{t_i}\sim \mu_i +\sqrt{1-\hat{\rho}^2}\int_{t_{i-1}}^{t_i} \sqrt{V_u}\lambda(S_{t_{i-1}},u)  \mathrm{d}X_u$$
where $X$ is a Brownian motion independent of $\mathcal{F}^{W}_{t_{i-1}}\cup \mathcal{F}^{Z}_{t_i}$. We remark $\lambda(S_t,t)$ is constant and $\mathcal{G}_{t_i}$-measurable for $t\in[t_{i-1},t_i)$. Note also that $\mu_i\in\mathcal{F}^{W}_{t_{i-1}}\cup \mathcal{F}^{Z}_{t_i}$ and $(1-\hat{\rho}^2)\int_{t_{i-1}}^{t_i} \sqrt{V_u}\lambda(S_{t_{i-1}},u)  \mathrm{d}X_u$ has a deterministic integrand under conditioning, which by properties of the It\^o integral is a centred Gaussian with variance $(1-\hat{\rho}^2)\sigma_i^2$ and the result follows.
\end{proof}
The next corollary gives the exact formula that we proclaimed.
\begin{corollary}\label{corollary}
Given model \eqref{eq:LSV} and Assumption \ref{ass:Lambda} we have
\begin{equation}\frac{\mathbb{E}^{\mathbb{Q}}[ V_{t_i} D(t_i)|S_{t_i}=K_j]}{\mathbb{E}^{\mathbb{Q}}[ D(t_i)|S_{t_i}=K_j]}=\frac{\mathbb{E}^{\mathbb{Q}}\left[D(t_i) V_{t_i}\displaystyle\frac{e^{-\frac{1}{2}(d_i(K))^2}}{\sqrt{\sigma_i^2}}\right]}{\mathbb{E}^{\mathbb{Q}}\left[D(t_i) \displaystyle\frac{e^{-\frac{1}{2}(d_i(K))^2}}{\sqrt{\sigma_i^2}}\right]},\quad d_i(K)=\frac{\mu_i-\log(K)}{\sqrt{(1-\hat{\rho}^2)\sigma^2_i}}. \label{eq:condExact}\end{equation}
\end{corollary}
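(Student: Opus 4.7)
The plan is to combine Theorem \ref{thm:condDistr} with the density/Bayes version of the tower property. Set $\mathcal{H}:=\mathcal{F}^{W}_{t_{i-1}}\cup \mathcal{F}^{Z}_{t_i}$. First I would observe that both $V_{t_i}$ and $D(t_i)=e^{-\int_0^{t_i} r_s \mathrm{d}s}$ are $\mathcal{F}^{Z}_{t_i}$-measurable (since $V$ and $r$ are driven by $Z$), hence $\mathcal{H}$-measurable. The theorem gives the explicit conditional density of $\log S_{t_i}$ given $\mathcal{H}$, from which the conditional density of $S_{t_i}$ at level $K$ follows by a change of variables:
\begin{equation*}
p(K\mid \mathcal{H}) \;=\; \frac{1}{K\sqrt{2\pi(1-\hat{\rho}^2)\sigma_i^2}}\,e^{-\frac{1}{2}(d_i(K))^2}.
\end{equation*}

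Next I would invoke the following identity, which is the key step: for any $\mathcal{H}$-measurable and integrable random variable $Y$,
\begin{equation*}
\mathbb{E}^{\mathbb{Q}}\!\left[Y \mid S_{t_i}=K\right] \;=\; \frac{\mathbb{E}^{\mathbb{Q}}\!\left[\,Y\, p(K\mid \mathcal{H})\right]}{\mathbb{E}^{\mathbb{Q}}\!\left[\,p(K\mid \mathcal{H})\right]}.
\end{equation*}
This is the standard disintegration/Bayes identity and is obtained by testing both sides against an arbitrary bounded measurable $g(S_{t_i})$ and applying the tower property, using that the unconditional density of $S_{t_i}$ at $K$ equals $\mathbb{E}^{\mathbb{Q}}[p(K\mid \mathcal{H})]$.

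I would then apply this identity to $Y=V_{t_i}D(t_i)$ for the numerator and to $Y=D(t_i)$ for the denominator of the conditional expectation of interest. Taking the ratio, the $K$-dependent prefactor $\frac{1}{K\sqrt{2\pi(1-\hat{\rho}^2)}}$ is common to both and cancels, leaving exactly the closed-form expression \eqref{eq:condExact}.

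The main obstacle is the rigorous justification of the Bayes identity above; it relies on the existence and regularity of a conditional density, which is precisely what Theorem \ref{thm:condDistr} supplies in closed form (a non-degenerate Gaussian whenever $\hat{\rho}^2<1$ and $\sigma_i^2>0$). The remaining subtlety is that $V_{t_i}$ and $D(t_i)$ must be $\mathcal{H}$-measurable so that the identity applies with them as $Y$; once this is checked using the filtration structure of Section 2, the rest is algebraic cancellation.
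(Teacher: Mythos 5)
Your proposal is correct and follows essentially the same route as the paper: condition on $\mathcal{F}^{W}_{t_{i-1}}\cup\mathcal{F}^{Z}_{t_i}$, use that $V_{t_i}$ and $D(t_i)$ are measurable with respect to that $\sigma$-algebra, and substitute the Gaussian conditional density from Theorem \ref{thm:condDistr}, with the deterministic prefactor $\bigl(K\sqrt{2\pi(1-\hat{\rho}^2)}\bigr)^{-1}$ cancelling in the ratio. The only difference is cosmetic: the paper phrases the key step informally via Dirac deltas, $\mathbb{E}^{\mathbb{Q}}[Y\mid S_{t_i}=K]=\mathbb{E}^{\mathbb{Q}}[Y\delta(S_{t_i}-K)]/\mathbb{E}^{\mathbb{Q}}[\delta(S_{t_i}-K)]$, whereas your disintegration/Bayes identity is the rigorous rendering of exactly that step.
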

\begin{proof}
We note that $$\mathbb{E}^{\mathbb{Q}}[D(t_i) V_{t_i}|S_{t_i}=K]=\frac{\mathbb{E}^{\mathbb{Q}}[D(t_i) V_{t_i}\delta(S_{t_i}-K)]}{\mathbb{E}^{\mathbb{Q}}[\delta(S_{t_i}-K)]},$$
where $\delta(\cdot)$ represents the Dirac function at $0$. Using the Tower property we further get 
$$\mathbb{E}^{\mathbb{Q}}[D(t_i) V_{t_i}|S_{t_i}=K]=\frac{\mathbb{E}^{\mathbb{Q}}\left[D(t_i) V_{t_i}\mathbb{E}^{\mathbb{Q}}[\delta(S_{t_i}-K)|\mathcal{F}^{W}_{t_{i-1}}\cup \mathcal{F}^{Z}_{t_i}]\right]}{\mathbb{E}^{\mathbb{Q}}\left[\mathbb{E}^{\mathbb{Q}}[\delta(S_{t_i}-K)|\mathcal{F}^{W}_{t_{i-1}}\cup \mathcal{F}^{Z}_{t_i}]\right]},$$
where we have used that $V_{t_i},D(t_i)\in\mathcal{F}^{Z}_{t_i}$. Next, we invoke Theorem \ref{thm:condDistr}, which gives the conditional probability density function $\phi(\cdot)$ of $S$:
$$\phi(x)=\displaystyle\frac{1}{x}\frac{e^{-\frac{1}{2}(d_i(x))^2}}{\sqrt{2\pi \sigma_i}}.$$
The result can be now directly derived by using the conditional density along with the Dirac function and the same procedure for $\mathbb{E}^{\mathbb{Q}}[D(t_i)|S_{t_i}=K]$.
\end{proof}
In a Monte Carlo setting we may apply Corollary \ref{corollary} to obtain
\begin{equation}\label{eq:MCexact}\frac{\mathbb{E}^{\mathbb{Q}}[D(t_i) V_{t_i}|S_{t_i}=K]}{\mathbb{E}^{\mathbb{Q}}[D(t_i) |S_{t_i}=K]}=\frac{\displaystyle \sum_{u=1}^M D(t_i)^u V^u_{t_i}\frac{e^{-\frac{1}{2}(d^u_i(K))^2}}{\sqrt{(\sigma^u_i)^2}}}{\displaystyle \sum_{u=1}^MD(t_i)^u\frac{e^{-\frac{1}{2}(d^u_i(K))^2}}{\sqrt{(\sigma^u_i)^2}} }\end{equation}
where $u$ represents the $u$-th particle.
\begin{remark}
Note that expression \eqref{eq:MCexact} allows for exact (unbiased) Monte Carlo computation of
the conditional expectation, without the use of external parameters. Remarkably, the use of conditional expectations in \eqref{eq:MCexact} theoretically guarantees a variance reduction.
\end{remark}
\subsection{Further exploiting the discrete nature of numerical algorithms: lognormal SV case}
The results developed in Corollary \ref{corollary} allow to obtain a closed-form expression for virtually any SV model. In spite of the universality of the result, there is a (mild) memory cost of storing $\int_{t_{i-1}}^{t_i} \sqrt{V_u}\lambda(S_{t_{i-1}},u)  dW_u^{||}$. In this section, we show how to remove this dependence. For simplicity, in this section we assume that the instantaneous variance in \eqref{eq:LSV} follows a lognormal distribution
\begin{equation}\label{eq:Vlognormal}
\log(V_{t_i})\sim \mathcal{N}(\xi_i,\nu_i).
\end{equation}
Hence, the conditional distribution is also given by a lognormal random variable:
\begin{equation}
\log(V_{t_i})|\mathcal{G}_{t_{i-1}}\sim \mathcal{N}(\widetilde{\xi}_i,\widetilde{\nu}_i).
\end{equation}
Furthermore, we consider the processes $V$ and $r$ to be piecewise constant, i.e.

\begin{equation}\label{eq:discreteTime}V_t:=\sum_{i=1}^{\tau(t)} V_{t_{i-1}} \ind_{\{t\in[t_{i-1},t_i)\}},\quad r_t:=\sum_{i=1}^{\tau(t)} r_{t_{i-1}} \ind_{\{t\in[t_{i-1},t_i)\}},\quad  \tau(t):=\argmin_{j\in\mathbb{N}}\{t<t_j\}. \end{equation}
Note that \eqref{eq:discreteTime} is the usual time discretisation needed to perform numerical (forward Euler) simulation or
PDE pricing; it therefore does not pose additional constraints.
\begin{proposition}\label{proposition}
Under Assumption \ref{ass:Lambda} and model \eqref{eq:LSV} with $V$ and $r$ given by \eqref{eq:Vlognormal}-\eqref{eq:discreteTime} we have,
\begin{equation}\label{eq:CloseFormLognormal}\frac{\mathbb{E}^{\mathbb{Q}}[ V_{t_i} D(t_i)|S_{t_i}=K_j]}{\mathbb{E}^{\mathbb{Q}}[ D(t_i)|S_{t_i}=K_j]}=\frac{\mathbb{E}^{\mathbb{Q}}\left[ D(t_i) \frac{\exp\left\lbrace \widetilde{\xi}_i+\frac{1}{2}\widetilde{\nu}_i-\frac{1}{2\sigma_i^2 (1-\rho^2)}\left(\widetilde{\mu}_i-\log(K))^2+\rho(\widetilde{\mu}_i-\log(K))\right)\right\rbrace}{\sqrt{\sigma_i^2}}\right]}{\mathbb{E}^{\mathbb{Q}}\left[ D(t_i)\displaystyle\frac{e^{-\frac{1}{2}(\widetilde{d}_i(K))^2}}{\sqrt{\sigma_i^2}}\right]}, \end{equation}
where
 \begin{align*}\widetilde{\mu}_i&:=\log S_{t_{i-1}}+\int_{t_{i-1}}^{t_i} r_u \mathrm{d}u -\frac{1}{2} \int_{t_{i-1}}^{t_i}V_u \lambda(S_{t_{i-1}},u)^2 \mathrm{d}u\\
 \widetilde{d}_i(K)&:=\frac{\widetilde{\mu}_i-\log(K)}{\sqrt{\sigma^2_i}},\quad \rho:=\text{corr}(W_{t_i}-W_{t_{i-1}},\log(V_{t_i})|\mathcal{G}_{t_{i-1}}).
\end{align*}
\end{proposition}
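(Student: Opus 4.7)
The plan is to start from formula \eqref{eq:condExact} of Corollary \ref{corollary} and apply a further layer of tower conditioning on $\mathcal{G}_{t_{i-1}}$ in both the numerator and denominator. Under the piecewise-constant prescription \eqref{eq:discreteTime} combined with Assumption \ref{ass:Lambda}, the quantities $D(t_i)$, $\widetilde{\mu}_i$ and $\sigma^2_i$ are all $\mathcal{G}_{t_{i-1}}$-measurable and factor out of the inner expectation. The residual randomness between $t_{i-1}$ and $t_i$ enters only through $W_{t_i}^{||}-W_{t_{i-1}}^{||}$, via the identity $\mu_i = \widetilde{\mu}_i + \sqrt{V_{t_{i-1}}}\,\lambda(S_{t_{i-1}},t_{i-1})(W_{t_i}^{||}-W_{t_{i-1}}^{||})$, and through $\log V_{t_i}$. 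By hypothesis \eqref{eq:Vlognormal}, these two random variables are jointly Gaussian conditionally on $\mathcal{G}_{t_{i-1}}$, with correlation controlled by the quantity $\rho$ introduced in the proposition.

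First I would tackle the denominator. Since $d_i(K)$ is conditionally Gaussian given $\mathcal{G}_{t_{i-1}}$ (with mean expressible in terms of $\widetilde{d}_i(K)$ and variance $\hat{\rho}^2/(1-\hat{\rho}^2)$), the inner integral $\mathbb{E}^{\mathbb{Q}}[e^{-d_i(K)^2/2}\mid\mathcal{G}_{t_{i-1}}]$ reduces to a standard Gaussian-times-Gaussian integral that yields $\sqrt{1-\hat{\rho}^2}\,e^{-\widetilde{d}_i(K)^2/2}$. For the numerator I would exploit the bivariate Gaussian joint law of $(\log V_{t_i},\log S_{t_i})\mid\mathcal{G}_{t_{i-1}}$ to write $\mathbb{E}^{\mathbb{Q}}[V_{t_i}\delta(S_{t_i}-K)\mid\mathcal{G}_{t_{i-1}}]$ as the product of the marginal density of $\log S_{t_i}$ at $\log K$ and the conditional expectation $\mathbb{E}^{\mathbb{Q}}[V_{t_i}\mid\log S_{t_i}=\log K,\mathcal{G}_{t_{i-1}}]$, the latter obtained from linear Gaussian regression combined with the lognormal moment generating function. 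This produces a factor of the form $\exp\{\widetilde{\xi}_i+\rho\sqrt{\widetilde{\nu}_i/\sigma_i^2}(\log K-\widetilde{\mu}_i)+\tfrac{1}{2}(1-\rho^2)\widetilde{\nu}_i\}$ multiplied by the Gaussian density $e^{-\widetilde{d}_i(K)^2/2}/\sqrt{2\pi\sigma_i^2}$.

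Putting the pieces together, the common prefactor $\sqrt{1-\hat{\rho}^2}$ (and the $1/\sqrt{2\pi}$) cancel between numerator and denominator, leaving only $\rho$ in the final closed form. The main obstacle is the algebraic bookkeeping in the numerator: one must carefully distinguish $\rho$ (the conditional one-step correlation of $W$ and $\log V$) from $\hat{\rho}$ (the projection ratio of $W$ onto $Z$ coming from Theorem \ref{thm:condDistr}), keep track of whether each variance factor carries $\sigma_i^2$ or $(1-\hat{\rho}^2)\sigma_i^2$, and complete the square in $\widetilde{\mu}_i-\log K$ to consolidate the exponent into the compact form displayed in \eqref{eq:CloseFormLognormal}.
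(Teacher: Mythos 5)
Your proposal is sound and, at its core, performs the same computation as the paper: everything reduces to evaluating $\mathbb{E}^{\mathbb{Q}}[V_{t_i}\delta(S_{t_i}-K)\mid\mathcal{G}_{t_{i-1}}]$ and $\mathbb{E}^{\mathbb{Q}}[\delta(S_{t_i}-K)\mid\mathcal{G}_{t_{i-1}}]$ from the bivariate Gaussian law of $(\log S_{t_i},\log V_{t_i})\mid\mathcal{G}_{t_{i-1}}$, which you do (for the numerator) exactly as the paper does, via the marginal density at $\log K$ times the regression/lognormal-MGF formula for $\mathbb{E}^{\mathbb{Q}}[V_{t_i}\mid\log S_{t_i}=\log K,\mathcal{G}_{t_{i-1}}]$. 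Where you differ is the scaffolding: the paper never passes through Corollary \ref{corollary}; it conditions on the coarser $\sigma$-algebra $\mathcal{G}_{t_{i-1}}$ from the outset (Dirac representation plus tower property), so its denominator is simply the marginal density $e^{-\widetilde{d}_i(K)^2/2}/\sqrt{2\pi\sigma_i^2}$, whereas you recover it by a second tower step applied to \eqref{eq:condExact}, integrating $e^{-d_i(K)^2/2}$ against the conditional Gaussian law of $d_i(K)$. That detour is correct — your measurability checks for $D(t_i),\widetilde{\mu}_i,\sigma_i^2$ under \eqref{eq:discreteTime}, the conditional variance $\hat{\rho}^2/(1-\hat{\rho}^2)$ of $d_i(K)$, and the resulting factor $\sqrt{1-\hat{\rho}^2}\,e^{-\widetilde{d}_i(K)^2/2}$ (which cancels between numerator and denominator) all check out — but it buys nothing beyond showing consistency with the Monte Carlo estimator \eqref{eq:MCexact}; the direct route is shorter. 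One substantive remark: the exponent you derive, $\widetilde{\xi}_i+\tfrac12(1-\rho^2)\widetilde{\nu}_i+\rho\sqrt{\widetilde{\nu}_i/\sigma_i^2}\,(\log K-\widetilde{\mu}_i)-\tfrac{1}{2\sigma_i^2}(\widetilde{\mu}_i-\log K)^2$, is the correct evaluation of the bivariate Gaussian integral (it reduces to the paper's form only when $\rho=0$). The exponent actually displayed in \eqref{eq:CloseFormLognormal} — with $\tfrac12\widetilde{\nu}_i$ in place of $\tfrac12(1-\rho^2)\widetilde{\nu}_i$, the factor $(1-\rho^2)$ dividing the whole bracket, and a linear term $\rho(\widetilde{\mu}_i-\log K)$ lacking the $\sqrt{\widetilde{\nu}_i\sigma_i^2}$ scaling — is not dimensionally consistent and appears to contain transcription errors, so do not contort your (correct) algebra to reproduce it literally.
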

\begin{proof}
As in the previous result, by definition we have
$$\mathbb{E}^{\mathbb{Q}}[ D(t_i) V_{t_i}|S_{t_i}=K]=\frac{\mathbb{E}^{\mathbb{Q}}\left[D(t_i) \mathbb{E}^{\mathbb{Q}}[V_{t_i}\delta(S_{t_i}-K)|\mathcal{G}_{t_{i-1}}]\right]}{\mathbb{E}^{\mathbb{Q}}\left[\mathbb{E}^{\mathbb{Q}}[\delta(S_{t_i}-K)|\mathcal{G}_{t_{i-1}}]\right]}.$$
Using \eqref{eq:discreteTime}, we note that in the numerator we have a bivariate Gaussian distribution with the
variables:
$$\begin{pmatrix}\log(S_{t_i})|\mathcal{G}_{t_{i-1}}\\\log (V_{t_i})|\mathcal{G}_{t_{i-1}}\end{pmatrix}\sim\mathcal{N}\left(\begin{pmatrix} \widetilde{\mu}_i \\ \widetilde{\xi}_i \end{pmatrix}, \begin{pmatrix} \sigma_i^2  & \rho\sqrt{\sigma_i^2 \widetilde{\nu}_i}\\  \rho\sqrt{\sigma_i^2 \widetilde{\nu}_i} &  \widetilde{\nu}_i  \end{pmatrix}\right).$$
Computing the conditional expectation in the numerator yields,
$$\mathbb{E}^{\mathbb{Q}}[V_{t_i}\delta(S_{t_i}-K)|\mathcal{G}_{t_{i-1}}]=\frac{\exp\left\lbrace \widetilde{\xi}_i+\frac{1}{2}\widetilde{\nu}_i-\frac{1}{2\sigma_i^2 (1-\rho^2)}\left(\widetilde{\mu}_i-\log(K))^2+\rho(\widetilde{\mu}_i-\log(K))\right)\right\rbrace}{\sqrt{2\pi(1-\rho^2)\sigma_i^2}},$$
and the result readily follows by simplifying terms and proceeding similarly with $\mathbb{E}^{\mathbb{Q}}[ D(t_i)|S_{t_i}=K]$.
\end{proof}
\begin{remark}
Even though we only considered lognormal SV models in Proposition \ref{proposition}, it should be possible to derive analytic expressions with other dynamics as long as the conditional expectations are available in closed-form. This computations, in principle, need be done in a case by case basis.
\end{remark}
\section{The new LSV calibration algorithm}
\begin{enumerate} 
\item Fix a time partition $\{t_0=0,...,t_n=T\}$ and space partition $\{K_0=K_{min},...,K_m=K_{max}\}$.
\item Initialize $\lambda(K_j,t_0)=\displaystyle\frac{\sigma_{Dup}(K_j,t_0)}{V_0},\quad j=0,...,m$ .
\item For $i=1...n:$  \\
\hspace*{0.25cm} For $j=0,...,m$
\begin{quote}
\textbf{Step 1. Predictor:}
\begin{quote}
$\bullet$  Diffuse particles from $t_{i-1}$ to $t_i$ according to \eqref{eq:LSV} and compute $\int_{t_{i-1}}^{t_i} \sqrt{V_u}\lambda(S_{t_{i-1}},u)  dW_u^{||}$\\
$\bullet$ Use formula \eqref{eq:MCexact} to obtain $\displaystyle \frac{\mathbb{E}^{\mathbb{Q}}[D(t_i)V_{t_i}|S_{t_i}=K_j]}{\mathbb{E}^{\mathbb{Q}}[D(t_i)|S_{t_i}=K_j]}$
\end{quote}
\textbf{Step 2. Corrector:}
\begin{quote}
$\bullet$$\lambda^2(K_j,t_i)=\displaystyle\left(\sigma_{Dup}(K_j,t_i)^2-\frac{\mathbb{E}^{\mathbb{Q}}\left[ D(t_i) \left(r_{t_i}-\bar{r}_{t_i}\right)\displaystyle \ind_{\{S_{t_i}>K_j\}}\right]}{\frac{1}{2}K\frac{\partial^2 C}{d K^2}}\right)\frac{\mathbb{E}^{\mathbb{Q}}[ D(t_i)|S_{t_i}=K_j]}{\mathbb{E}^{\mathbb{Q}}[ D(t_i)V_{t_i}|S_{t_i}=K_j]}.$\\
$\bullet$ Extrapolate flat outside $[K_{min},K_{max}]$ and lay a cubic spline in between.
\end{quote}
\end{quote}
\end{enumerate}
\begin{remark}
If $V$ is lognormal one can replace \eqref{eq:MCexact} by \eqref{eq:CloseFormLognormal}.
\end{remark}\bigskip
\textbf{Computational considerations:}\\
While one can consider sorting particles according to spot value as in Guyon and Henry-Labord\`ere \cite{Guyon-Labordere}, ultimately the savings in computational time will depend on the Monte Carlo sample size and the relative cost/benefit of the sorting procedure. In our case, it is preferable to sort particles according to $\frac{\mu_i}{\sqrt{\sigma_i^2}}$ to set a threshold. Whether sorting particles or not, the computational complexity of our algorithm will be virtually the same as the original particle method.\\\\
As mentioned in \cite{Guyon-Labordere} with the sorting technique in place, the cost of computing  the term
$$\mathbb{E}^{\mathbb{Q}}\left[ D(t_i)\left(r_{t_i}-\bar{r}_t\right)\displaystyle \ind_{\{S_{t_i}>K_j\}}\right]$$ may overtake the computational complexity. Indeed for low strikes $\ind_{\{S_{t_i}>K_j\}}$ will be one with high likelihood and particles cannot be disregarded. Contrary to the Malliavin representation approach presented in \cite{Guyon-Labordere}, be propose the use of symmetry and Theorem \ref{thm:condDistr} to obtain (see Appendix \ref{App:Proof1})
\begin{align}\mathbb{E}^{\mathbb{Q}}\left[ D(t_i)\left(r_{t_i}-\bar{r}_t\right)\displaystyle \ind_{\{S_{t_i}>K_j\}}\right]&=\mathbb{E}^{\mathbb{Q}}\left[ D(t_i)\left(r_{t_i}-\bar{r}_t\right)\Phi(-d_i(K_j))\right]\label{eq:1}\\&=-\mathbb{E}^{\mathbb{Q}}\left[ D(t_i)\left(r_{t_i}-\bar{r}_t\right)\Phi(d_i(K_j))\right]\label{eq:2},\end{align}
where $\Phi(\cdot)$ is the standard Gaussian cdf. Therefore, we can use expression \eqref{eq:1} for $K>S_0$ and \eqref{eq:2} for $K\leq S_0$ to disregard particles and keep the computational cost controlled. We further emphasise that when extrapolating, if $K\to 0$ or $K\to\infty$ the expectations above converge to zero. \\\\
\textbf{Theoretical considerations:}\\Precise conditions on system \eqref{eq:LSV}-\eqref{eq:LeverageFunctionCondition}  for a solution to exists still remains an open question. In \cite{CMR,Guyon-Labordere} it is reported that for large values of vol of vol the algorithm fails to converge.  We tested this phenomena, but seems that the domain of convergence of our algorithm is superior to that of the particle method see Figure \ref{Fig:conv}. Whether such behaviour is caused by the choice of an inadequate bandwidth parameter $h$ remains unclear, though raises the issue of instability in the original method.  Further tests with higher level of vol showed that both our algorithm and the original particle method fail (see Figure \ref{Fig:fail}) to match the market smiles. The precise theoretical identification of an upper bound in volatility of volatility remains unanswered.\\\\
\begin{figure}
\centering
\includegraphics[scale=0.35]{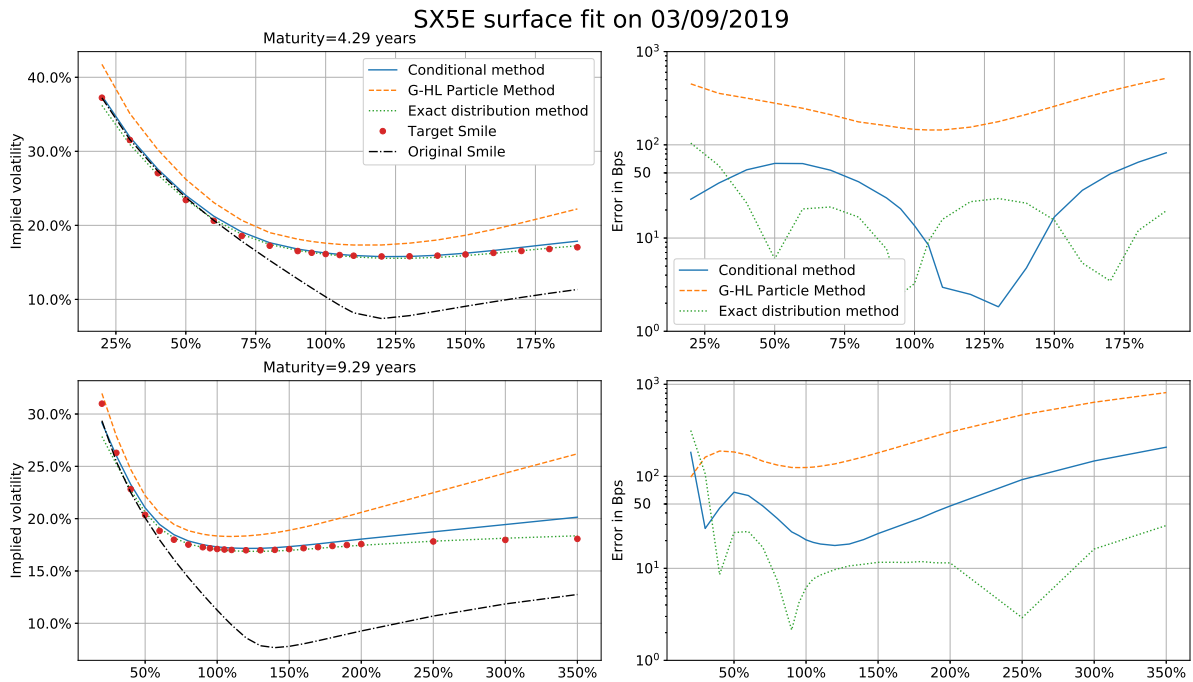}
\caption{SX5E (3-Sept-2019) Implied volatility surface. Rough volatility parameters: $H=0.2$, $\rho_{WZ}=-80\%,\;\beta=0.5,\;\nu=380\%$. Vasicek parameters: $\kappa=1,\;\sigma=0.5\%,\;\rho_{WY}=0\%,\;r_0=1.5\%,\;\rho_{ZY}=0\%$. The conditional and exact distribution methods are given by equations \eqref{eq:MCexact} and \eqref{eq:CloseFormLognormal} respectively. G-HL Particle method is described in \eqref{eq:ParticleMethod}.}
\label{Fig:conv}
\end{figure}
\begin{figure}
\centering
\includegraphics[scale=0.35]{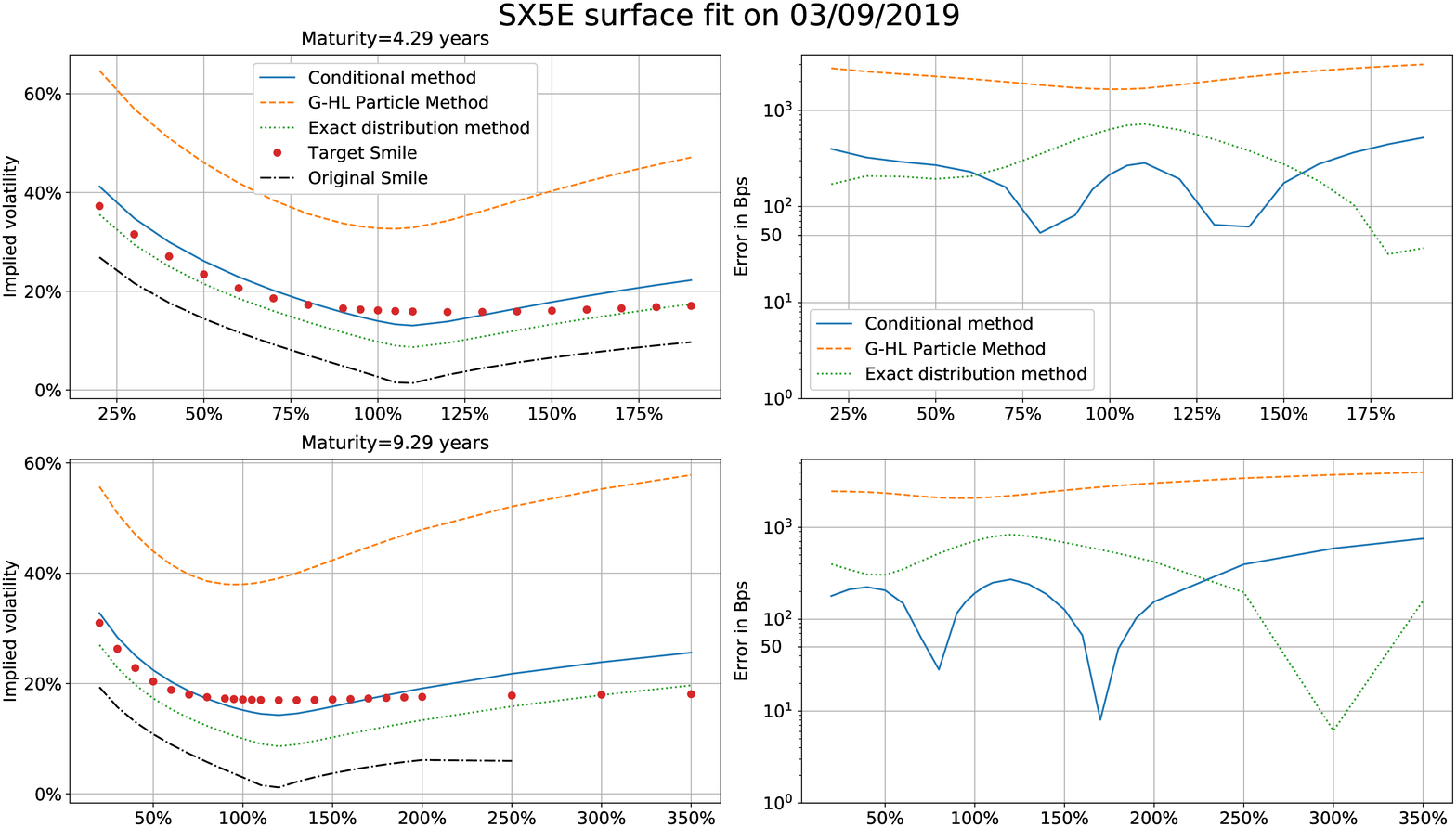}
\caption{SX5E (3-Sept-2019) Implied volatility surface. Rough volatility parameters: $H=0.2$, $\rho_{WZ}=-80\%,\;\beta=0.5,\;\nu=800\%$. Vasicek parameters: $\kappa=1,\;\sigma=0.5\%,\;\rho_{WY}=0\%,\;r_0=1.5\%,\;\rho_{ZY}=0\%$. The conditional and exact distribution methods are given by equations \eqref{eq:MCexact} and \eqref{eq:CloseFormLognormal} respectively. G-HL Particle method is described in \eqref{eq:ParticleMethod}.}
\label{Fig:fail}
\end{figure}
\section{Overture to local correlation and basket smiles}
So far, we have only considered the mono-underlying case, but there is no reason why our method cannot be extended to a multi-asset environment. Let us consider an index $I_t=\sum_{i=1}^p \omega_i S^i_t$, where
\begin{align*}
\frac{\mathrm{d}I_t}{I_t}&=r_t  \mathrm{d}t+\sum_{i=1}^n \frac{\omega_i S^i_t  \lambda^i(S^i_t,t) \sqrt{V^i_t}}{I_t} \mathrm{d} W^i_t\label{eq:LSV}\\
\frac{\mathrm{d}S^i_t}{S^i_t}&=r_t \mathrm{d}t+\lambda^i(S^i_t,t) \sqrt{V^i_t} \mathrm{d}W^i_t,\quad i=1,...,p\\
\mathrm{d}[W^i,W^j]_t:&= \rho_{ij}(I_t,t) \mathrm{d}t .
\end{align*}
The condition for the index (or basket) to be calibrated to market smiles is given by (see Guyon \cite{Guyon}):
\begin{equation}\sum_{i=1}^n \sum_{j=1}^n \omega_i\omega_j  \rho_{ij}(K,t) \frac{\mathbb{E}^{\mathbb{Q}}[D(t) \lambda^i(S^i_t,t) \sqrt{V^i_t}S_t^i \lambda^i(S^j_t,t) \sqrt{V^j_t} S_t^j|I_t=K]}{K^2\mathbb{E}^{\mathbb{Q}}[D(t)|I_t=K]}=\sigma_{Dup}(K,t)^2-\frac{\mathbb{E}^{\mathbb{Q}}\left[D(t)\left(r_t-\bar{r}_t\right)\ind_{\{I_t>K\}}\right]}{\frac{1}{2}K\frac{\partial^2 C}{d K^2}}.\label{eq:multiAsset}\end{equation}
At this stage an assumption on the particular structure of $\rho_{ij}(\cdot,\cdot)$ is tipycally made, both to ensure computational feasibility and positive definitness of the correlation matrix. Nevertheless, for the purpose of numerically solving \eqref{eq:multiAsset}, the problem is equivalent to being able to compute 
$$\frac{\mathbb{E}^{\mathbb{Q}}[D(t) \lambda^i(S^i_t,t) \sqrt{V^i_t}S_t^i \lambda^i(S^j_t,t) \sqrt{V^j_t} S_t^j|I_t=K]}{\mathbb{E}^{\mathbb{Q}}[D(t)|I_t=K]},\quad i,j=1,..,n.$$
Both Corollary \ref{corollary} and Proposition \ref{proposition} can be easily adapted to this setting; one needs to find the appropriate filtration $\mathcal{G}^I_t$ to make the tuple $(S^i_{t_i},S^j_{t_i})|\mathcal{G}^I_{t_{i-1}}$ jointly lognormal.  Therefore, similar closed-form and exact expressions  can be obtained after some tedious (but not difficult) calculations.
\section{Numerical tests with hybrid models}\label{sec:numerics}
\subsection{Vasicek and rough local stochastic volatility}
We first consider a rough volatility model, similar to the one introduced by Bayer, Friz and Gatheral \cite{BFG} topped with the Vasicek stochastic interest rate model. It is well known (see Al\`os, Le\'on and Vives \cite{ALV}) that such models reproduce a power law decay on the short time ATM skew. However, to control the long term behaviour we also add a mean-reversion parameter $\beta$. It is worth noting that non-Markovian dynamics imply that past behaviour of the volatility process influences the future behaviour.
\begin{align*}\frac{\mathrm{d}S_t}{S_t}&=r_t \mathrm{d}t+\lambda(S_t,t) \sqrt{V_t} \mathrm{d}W_t\\
V_t&=\xi_0(t)\mathcal{E}\left(\nu\sqrt{2H}\int_0^t (t-s)^{H-1/2}e^{-\beta(t-s)}\mathrm{d}Z_s\right),\quad \beta,\nu>0,\; H\in(0,1/2)\\
\mathrm{d}r_t&=(r_0-\kappa r_t) \mathrm{d}t +\sigma \mathrm{d}Y_t,\\
\mathrm{d}[W,Z]_t&=\rho_{WZ}\mathrm{d}t,\quad \mathrm{d}[W,Y]_t=\rho_{WY}\mathrm{d}t,\quad \mathrm{d}[Z,Y]_t=\rho_{ZY}\mathrm{d}t.
\end{align*}
where $\mathcal{E}(x)=\exp(x-\frac{1}{2}Var(x))$. Due to its non-Markovian nature, there is no forward Kolmogorov equation known for the transition
density. Hence PDE methods are out of the picture and one can only resort to simulation based
methods (see Horvath, Jacquier and Muguruza \cite{HJM} for details on simulation). In Figure \ref{fig:roughFit} we report the LSV calibration results with $50.000$ Monte Carlo paths and
$\frac{1}{252}$ time step. We observe that the proposed algorithm (both with the conditional approach and
distribution approach) converges appropriately and performs at the level of the particle method \cite{Guyon-Labordere}. For all tested maturities, we obtain an accuracy of a few basis points for reasonable strikes. We do not observe a significant performance difference between \eqref{eq:MCexact} and \eqref{eq:CloseFormLognormal} and conclude that both implementations yield a similar result.
\begin{figure}
\centering
\includegraphics[scale=0.35]{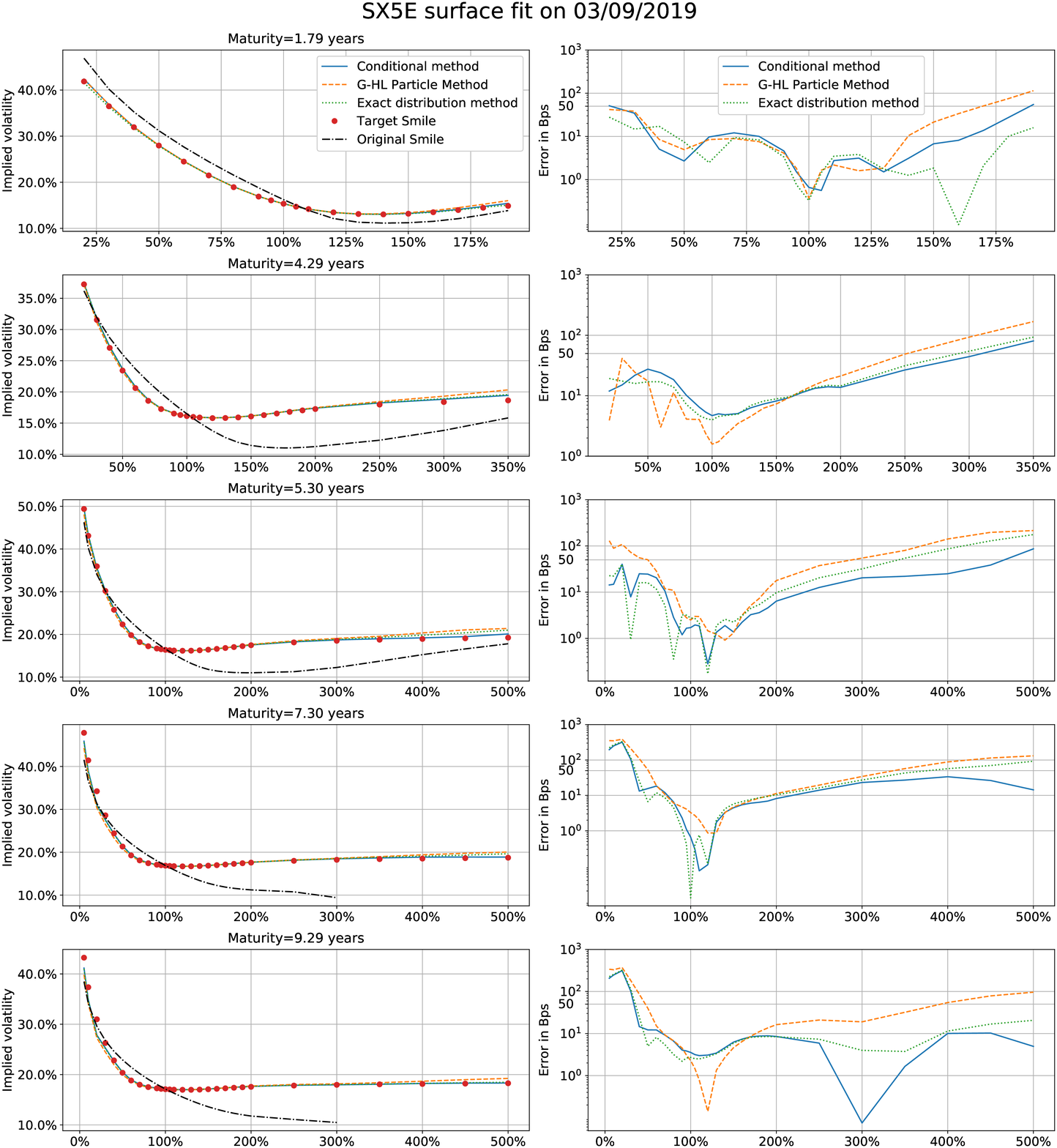}
\caption{SX5E (3-Sept-2019) Implied volatility surface. Rough volatility parameters: $H=0.2$, $\rho_{WZ}=-80\%,\;\beta=0.5,\;\nu=200\%$. Vasicek parameters: $\kappa=1,\;\sigma=0.5\%,\;\rho_{WY}=0\%,\;r_0=1.5\%,\;\rho_{ZY}=0\%$. The conditional and exact distribution methods are given by equations \eqref{eq:MCexact} and \eqref{eq:CloseFormLognormal} respectively. G-HL Particle method is described in \eqref{eq:ParticleMethod}.}
\label{fig:roughFit}
\end{figure}
\subsection{Vasicek and local 2F Bergomi}
To test our method in a higher dimensional  setting, we consider the two Factor Bergomi \cite{Bergomi} model with stochastic interest rates given by a Vasicek
interest rate model. While Bergomi's model is a popular equity model due to its flexibility to fit dynamical properties of the smile,  it is driven by a 4 dimensional Brownian, which poses a remarkable challenge on a PDE framework.
\begin{align*}\frac{\mathrm{d}S_t}{S_t}&=r_t \mathrm{d}t+\lambda(S_t,t) \sqrt{V_t} \mathrm{d}W_t\\
V_t&=\xi_0(t)\mathcal{E}\left(\nu\alpha_\theta\left( (1-\theta) \int_0^t e^{-\kappa_X(t-s)}\mathrm{d}X_s+\theta \int_0^t e^{-\kappa_Y(t-s)}\mathrm{d}Y_s \right)\right),\quad \kappa_X,\kappa_Y,\nu>0,\; \theta\in[0,1]\\
\mathrm{d}r_t&=(r_0-\kappa r_t) \mathrm{d}t +\sigma\mathrm{d}Z_t
\end{align*}
 In Figure \ref{fig:2FFit} we report the results with $50.000$ Monte Carlo
paths and $\frac{1}{252}$ time step. The results show again that our algorithm performs as good as the particle
method and converges successfully, with accuracies of few basis points for relevant strikes as before.

\begin{figure}
\centering
\includegraphics[scale=0.35]{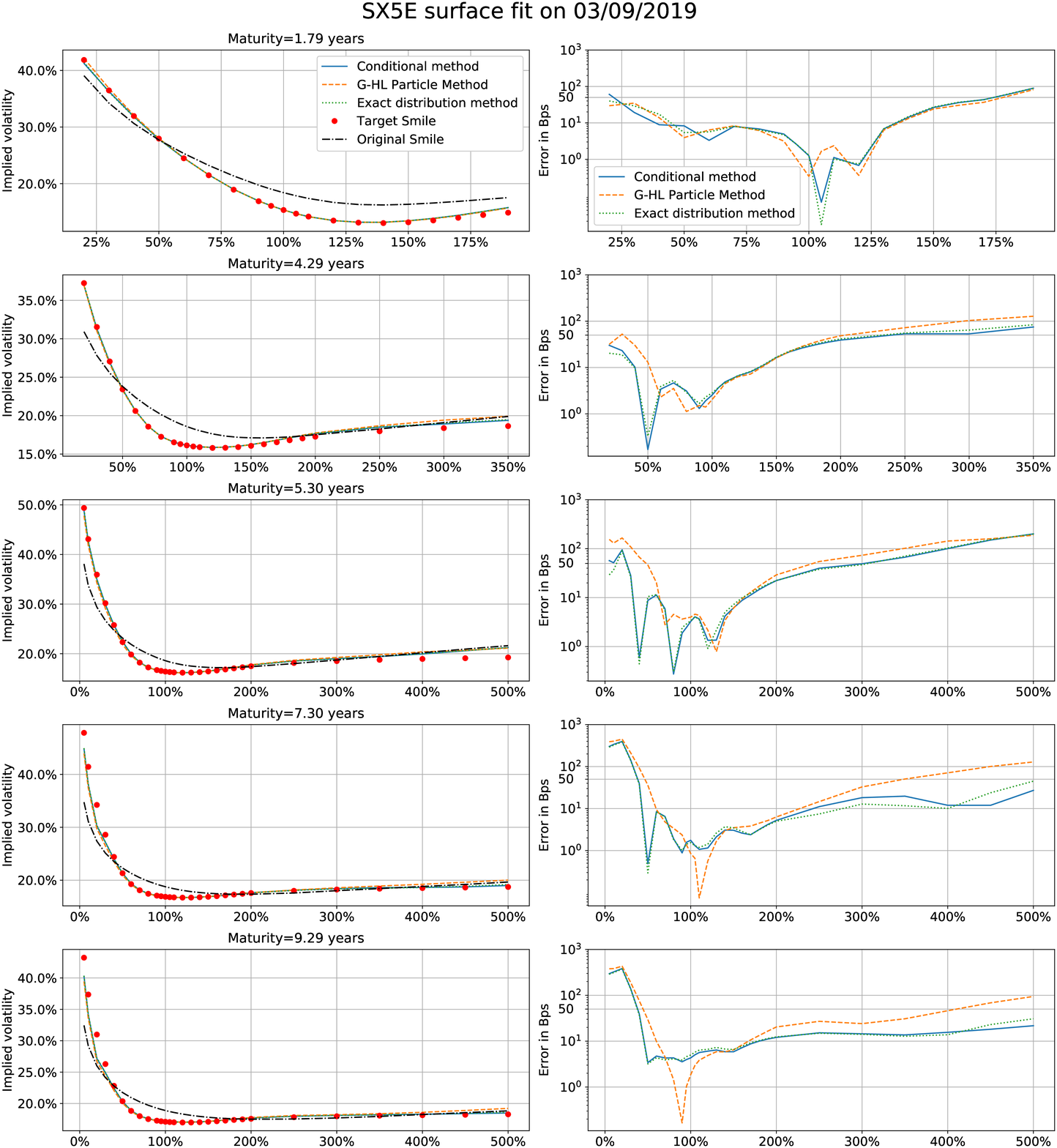}
\caption{SX5E (3-Sept-2019) Implied volatility surface. 2F Bergomi parameters: $\kappa_X=0.5,\;\kappa_Y=8,\;\theta=80\%$, $\rho_{WX}=-10\%,\;\rho_{WY}=-80\%,\;\rho_{XY}=30\%,\;\nu=400\%,\;\rho_{WZ}=\rho_{XZ}=\rho_{YZ}=0\%$. Vasicek parameters: $\kappa=1,\;\sigma=0.5\%,\;r_0=1.5\%$. The conditional and exact distribution methods are given by equations \eqref{eq:MCexact} and \eqref{eq:CloseFormLognormal} respectively. G-HL Particle method is described in \eqref{eq:ParticleMethod}.}
\label{fig:2FFit}
\end{figure}
\section{Summary}
In this article we have presented a new Monte Carlo method to calibrate the leverage function to any stochastic volatility model. Building upon the particle method developed by Guyon and Henry-Labord\`ere \cite{Guyon-Labordere}, we obtained a closed-form and exact calibration procedure that does not depend on any external parameter fine-tuning nor incurs any additional computational cost. Our experiments show that the accuracy of the method is in line with that of the original particle method and improves the convergence boundary in vol of vol. Overall, this new approach allows for an easier, safer and more robust implementation in an automatised production level environment.

\appendix
\section{Proof 1}\label{App:Proof1}
We have
\begin{align*}
&\mathbb{E}^{\mathbb{Q}}\left[ D(t_i)\left(r_{t_i}-\bar{r}_t\right) \ind_{\{S_{t_i}>K_j\}}\right]=\mathbb{E}^{\mathbb{Q}}\left[ D(t_i)\left(r_{t_i}-\bar{r}_t\right)\mathbb{E}^{\mathbb{Q}}\left[ \ind_{\{S_{t_i}>K_j\}}|\mathcal{F}^{W}_{t_{i-1}}\cup \mathcal{F}^{Z}_{t_i}\right]\right].
\end{align*}
Now, we use Theorem \ref{thm:condDistr} to obtain that $S_i|\mathcal{F}^{W}_{t_{i-1}}\cup \mathcal{F}^{Z}_{t_i}$ is lognormal and obtain
$$\mathbb{E}^{\mathbb{Q}}\left[ D(t_i)\left(r_{t_i}-\bar{r}_t\right)\displaystyle \ind_{\{S_{t_i}>K_j\}}\right]=\mathbb{E}^{\mathbb{Q}}\left[ D(t_i)\left(r_{t_i}-\bar{r}_t\right)\Phi(-d_i(K_j))\right].$$
Additionally, we note

$$0=\mathbb{E}^{\mathbb{Q}}\left[ D(t_i)\left(r_{t_i}-\bar{r}_t\right)\right]=\mathbb{E}^{\mathbb{Q}}\left[ D(t_i)\left(r_{t_i}-\bar{r}_t\right) \ind_{\{S_{t_i}>K_j\}}\right]+\mathbb{E}^{\mathbb{Q}}\left[ D(t_i)\left(r_{t_i}-\bar{r}_t\right) \ind_{\{S_{t_i}\leq K_j\}}\right].$$
Thus,
$$\mathbb{E}^{\mathbb{Q}}\left[ D(t_i)\left(r_{t_i}-\bar{r}_t\right) \ind_{\{S_{t_i}>K_j\}}\right]=-\mathbb{E}^{\mathbb{Q}}\left[ D(t_i)\left(r_{t_i}-\bar{r}_t\right) \ind_{\{S_{t_i}\leq K_j\}}\right]=-\mathbb{E}^{\mathbb{Q}}\left[ D(t_i)\left(r_{t_i}-\bar{r}_t\right)\Phi(d_i(K_j))\right].$$

\end{document}